\newtheorem{definitn}{Definition}
\newtheorem{thm}{Theorem}
\newtheorem{cor}{Corollary}
\newtheorem{remrk}{Remark}
\newtheorem{example}{Example}
\DeclareMathOperator*{\argmin}{arg\,min}
\DeclareMathOperator*{\argmax}{arg\,max}
\date{}
\begin{document}

\title{On the Joint Decoding of LDPC Codes and Finite-State Channels via
Linear Programming}

\author{\authorblockN{Byung-Hak Kim and Henry D. Pfister%
\thanks{This material is based upon work supported by the National Science
Foundation under Grant No. 07407470. Any opinions, findings, conclusions,
or recommendations expressed in this material are those of the authors
and do not necessarily reflect the views of the National Science Foundation.%
}} \authorblockA{Department of Electrical and Computer Engineering,
Texas A\&M University\\
 Email: \{bhkim,hpfister\}@tamu.edu} }
\maketitle
\begin{abstract}
In this paper, the linear programming (LP) decoder for binary linear
codes, introduced by Feldman, et al. is extended to joint-decoding
of binary-input finite-state channels. In particular, we provide a
rigorous definition of LP joint-decoding pseudo-codewords (JD-PCWs)
that enables evaluation of the pairwise error probability between
codewords and JD-PCWs. This leads naturally to a provable upper bound
on decoder failure probability. If the channel is a finite-state intersymbol
interference channel, then the LP joint decoder also has the maximum-likelihood
(ML) certificate property and all integer valued solutions are codewords.
In this case, the performance loss relative to ML decoding can be
explained completely by fractional valued JD-PCWs. 
\end{abstract}

\section{Introduction \label{sec:Intro} \vspace{-1mm}}

\subsection{Motivation}

Message-passing iterative decoding has been a very popular decoding
algorithm in research and practice for the past fifteen years \cite{Wiberg-96}.
In the last five years, linear programming (LP) decoding has been
a popular topic in coding theory and has given new insight into the
analysis of iterative decoding algorithms and their modes of failure
\cite{Feldman-2003}\cite{Feldman-it05}\cite{Pseudocodewords-web}.
For both decoders, fractional vectors, known as pseudo-codewords (PCWs),
play an important role in the performance characterization of these
decoders \cite{Feldman-it05}\cite{Vontobel-itsub07}. This is in
contrast to classical coding theory where the performance of most
decoding algorithms (e.g., maximum-likelihood (ML) decoding) is completely
characterized by the set of codewords. 

For channels with memory, such as finite-state channels (FSCs), the
situation is a bit more complicated. In the past, one typically separated
channel decoding (i.e., estimating the channel inputs from the channel
outputs) from error-correcting code (ECC) decoding (i.e., estimating
the transmitted codeword from estimates of the channel inputs) \cite{Muller-it04}.
The advent of message-passing iterative decoding enabled the joint-decoding
(JD) of the channel and code by iterating between these two decoders
\cite{Douillard-ett95}. 

In this paper, we extend the LP decoder to the JD of binary-input
FSCs and define LP joint-decoding pseudo-codewords (JD-PCWs). This
leads naturally to a provable upper bound (e.g., a union bound) on
the probability of decoder failure as a sum over all codewords and
JD-PCWs. This extension has been considered as a challenging open
problem in the prior work \cite{Tadashi-arxiv07}\cite{Feldman-2003}.
The problem is well posed by Feldman in his PhD thesis \cite[Section 9.5 page 146]{Feldman-2003},
\begin{quote}
\emph{\textquotedbl{}In practice, channels are generally not memoryless
due to physical effects in the communication channel.'' ... {}``Even
coming up with a proper linear cost function for an LP to use in these
channels is an interesting question. The notions of pseudocodeword
and fractional distance would also need to be reconsidered for this
setting.\textquotedbl{} }
\end{quote}
Other than providing satisfying answer to the above open question,
our primary motivation is the prediction of the error rate for joint
decoding at high SNR. The idea is to run a simulation at low SNR and
keep track of all observed codeword and pseudo-codeword errors. A
truncated union bound is computed by summing over all observed errors
and the result is an estimate of the error rate at high SNR. Computing
this bound is complicated by the fact that the loss of channel symmetry
implies that the dominant PCWs may depend on the transmitted sequence.

While we were preparing this manuscript, we became aware of a more
general approach by Flanagan \cite{Flanagan-aller08}\cite{Flanagan-arxiv09}.
In fact, our LP formulation was developed independently but is identical
to his {}``Efficient LP relaxation''. Our motivation, however, is
somewhat different. The main goal is to use the error rate of joint
LP decoding as a tool to analyze joint iterative decoding of FSCs
and low-density parity-check (LDPC) codes. Thus, we give novel prediction
results in Sec. \ref{sec:Simulations}. We also observe that both
formulations provide an ML edge-path certificate that is not equivalent
to an ML codeword certificate (see Remark \ref{rem:MLedgepath} and
\ref{rem:MLcertificate}). This property is not guaranteed by Wadayama's
approach based on quadratic programming \cite{Tadashi-arxiv07}. 

The paper is structured as follows. After briefly reviewing LP decoding
and FSCs in the remainder of Sec. \ref{sec:Intro}, we describe the
LP joint decoder in Sec. \ref{sec:Joint-LPD} and define JD-PCWs in
Sec. \ref{sec:JD-PCW}. In Sec. \ref{sec:Bound}, we discuss the decoder
performance analysis via the union bound (and pairwise error probability)
over JD-PCWs and notions of generalized Euclidean distance. Experimental
results are given in Sec. \ref{sec:Simulations} and conclusions are
given in Sec. \ref{sec:Concl}.\vspace{-1mm}

\subsection{Background }

Feldman, et al. introduced the LP decoding for binary linear codes
in \cite{Feldman-it05}\cite{Feldman-2003}. It is is based on solving
an LP relaxation of an integer program which is equivalent to ML decoding.
Later this method was extended to codes over larger alphabets \cite{Flanagan-it09}
and to the simplified decoding of intersymbol interference (ISI) \cite{Taghavi-itsub07}.
For long codes, the performance of LP decoding is slightly inferior
to iterative decoding but, unlike the iterative decoder, the LP decoder
either detects a failure or outputs a codeword which is guaranteed
to be the ML codeword.

Let $\mathcal{C}\subseteq\left\{ 0,1\right\} ^{n}$ be the length-$n$
binary linear code defined by the parity-check matrix $H$ and $\mathbf{c}=(c_{1},\ldots,c_{n})$
be a codeword. If $\mathcal{I}$ is the set whose elements are the
sets of indices involved in each parity check, then we have\[
\mathcal{C}=\left\{ \mathbf{c}\in\left\{ 0,1\right\} ^{n}\,\bigg|\,\sum_{i\in I}c_{i}\equiv0\,\bmod2,\,\forall\, I\in\mathcal{I}\right\} .\]
The \emph{codeword polytope} is the convex hull of $\mathcal{C}$.
This polytope can be quite complicated to describe though, so instead
one constructs a simpler polytope using local constraints. Each parity-check
$I\in\mathcal{I}$ defines a local constraint that can also be viewed
as a polytope in $\left[0,1\right]^{n}$.
\begin{definitn}
\label{def:LCP} The \emph{local codeword polytope} $\mbox{LCP(\ensuremath{I})}$
associated with a parity check is the convex hull of the bit sequences
that satisfy the check. It is given explicitly by\[
\mbox{LCP}(I)\triangleq\!\!\bigcap_{\substack{S\subseteq I\\
\left|S\right|\text{odd}}
}\left\{ \mathbf{c}\in[0,\,1]^{n}\,\bigg|\sum_{i\in S}c_{i}-\!\!\!\sum_{i\in I-S}\!\! c_{i}\leq\left|S\right|\!-\!1\right\} .\]

\end{definitn}

\begin{definitn}
The \emph{relaxed polytope} $\mathcal{P}(H)$ is the intersection
of the LCPs over all checks, so \begin{align*}
\mathcal{P}(H) & \triangleq\bigcap_{I\in\mathcal{I}}\mbox{LCP}(I).\end{align*}
\end{definitn}
\begin{thm}
[\cite{Feldman-2003}] Consider $n$ consecutive uses of a symmetric
channel $\Pr\left(Y=y|C=c\right)$. If a uniform random codeword is
transmitted and $\mathbf{y}=(y_{1},\ldots,y_{n})$ is received, then
the LP decoder outputs $\mathbf{f}=(f_{1},\ldots,f_{n})$ given by
\[
\argmin_{\mathbf{f}\in\mathcal{P}(H)}\sum_{i=1}^{n}f_{i}\,\mbox{log}\left(\frac{\mbox{Pr}(Y_{i}=y_{i}\,|\, C_{i}=0)}{\mbox{Pr}(Y_{i}=y_{i}\,|\, C_{i}=1)}\right),\]
which is the ML solution if $\mathbf{f}$ is integral (i.e., $\mathbf{f}\in\left\{ 0,1\right\} ^{n}$). \end{thm}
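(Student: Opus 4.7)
The plan is to reduce the statement to two facts: (i) ML decoding restricted to $\mathcal{C}$ is equivalent to minimizing the exact linear objective the LP uses, and (ii) the integer points of $\mathcal{P}(H)$ are precisely the codewords, i.e., $\mathcal{P}(H) \cap \{0,1\}^n = \mathcal{C}$. Once these are in hand, for any LP optimum $\mathbf{f}$ we have $\text{cost}(\mathbf{f}) \leq \min_{\mathbf{c} \in \mathcal{C}} \text{cost}(\mathbf{c})$ because $\mathcal{C} \subseteq \mathcal{P}(H)$ makes the LP feasible set a superset of the ML feasible set; if additionally $\mathbf{f}$ is integral, then (ii) places $\mathbf{f} \in \mathcal{C}$, forcing equality and making $\mathbf{f}$ an ML codeword.

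For (i), I would linearize the log-likelihood in each binary variable $c_i \in \{0,1\}$ via the identity $\log \Pr(Y_i = y_i \mid C_i = c_i) = \log \Pr(Y_i = y_i \mid C_i = 0) - c_i \log\bigl(\Pr(Y_i = y_i \mid C_i = 0)/\Pr(Y_i = y_i \mid C_i = 1)\bigr)$. Summing over $i$ using memorylessness of the $n$ channel uses, the $\mathbf{c}$-independent term drops out of any $\arg\min$, leaving exactly the LP objective. The symmetry hypothesis and the uniform codeword prior are what make ML a minimum-probability-of-error rule, but they are not actually required for the certificate claim itself.

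For (ii), the inclusion $\mathcal{C} \subseteq \mathcal{P}(H)$ is immediate since every codeword satisfies every parity check and therefore each forbidden-odd-subset inequality defining $\text{LCP}(I)$ in Definition~\ref{def:LCP}. The harder direction is showing that no integral non-codeword lies in $\mathcal{P}(H)$. Given $\mathbf{c} \in \{0,1\}^n$ that fails some check $I \in \mathcal{I}$, I would take $S$ to be the support of $\mathbf{c}$ inside $I$; failure of the check forces $|S|$ odd, and then the $S$-inequality evaluates to $\sum_{i \in S} c_i - \sum_{i \in I - S} c_i = |S|$, which exceeds $|S| - 1$ by exactly one, certifying $\mathbf{c} \notin \text{LCP}(I)$ and hence $\mathbf{c} \notin \mathcal{P}(H)$. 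This combinatorial identification is the main obstacle: it is the one place where the specific form of the LCP inequalities must be used, and the only step I would write out in full detail. Everything else is bookkeeping around the cost-comparison chain in the first paragraph.
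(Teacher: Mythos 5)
This theorem is quoted from Feldman's thesis and the paper itself gives no proof of it, so there is no internal proof to compare against; judged on its own, your argument is correct and is essentially the standard one behind the cited result: linearize the per-symbol log-likelihood so that ML over $\mathcal{C}$ becomes minimization of the LP objective, show $\mathcal{P}(H)\cap\{0,1\}^{n}=\mathcal{C}$, and conclude via the cost-comparison over the superset $\mathcal{P}(H)\supseteq\mathcal{C}$. The one step you call ``immediate'' that actually deserves a line is $\mathcal{C}\subseteq\mathrm{LCP}(I)$: for a codeword $\mathbf{c}$ and an odd subset $S\subseteq I$, either some $c_{i}=0$ with $i\in S$, so $\sum_{i\in S}c_{i}-\sum_{i\in I-S}c_{i}\leq|S|-1$ outright, or $c_{i}=1$ for all $i\in S$, in which case even parity of $\sum_{i\in I}c_{i}$ together with $|S|$ odd forces at least one $1$ in $I-S$ and again gives the bound. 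With that two-case check written out, your proof is complete.
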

\begin{definitn}
An \emph{LP decoding pseudo-codeword} (LPD-PCW) of a code defined
by the parity-check matrix $H$ is any vertex of the relaxed (fundamental)
polytope $\mathcal{P}(H).$ 
\end{definitn}

\begin{definitn}
A \emph{finite-state channel} (FSC) defines a probabilistic mapping
from a sequence of inputs to a sequence of outputs. Each output $Y_{i}\in\mathcal{Y}$
depends only on the current input $X_{i}\in\mathcal{X}$ and channel
state $S_{i}\in\mathcal{S}$ instead of the entire history of inputs
and channel states. Mathematically, we have $P\left(y,s'|x,s\right)\triangleq\mbox{Pr}\left(Y_{i}\!=\! y,S_{i+1}\!=\! s'|X_{i}\!=\! x,S_{i}\!=\! s\right)$
for all $i$, and we use the shorthand $P\left(y_{1}^{n},s_{2}^{n+1}|x_{1}^{n},s_{1}\right)$
for\begin{multline*}
\mbox{Pr}\left(Y_{1}^{n}\!=\! y_{1}^{n},S_{2}^{n+1}\!=\! s_{2}^{n+1}|X_{1}^{n}\!=\! x_{1}^{n},S_{1}\!=\! s_{1}\right)\\
=\prod_{i=1}^{n}P\left(y_{i},s_{i+1}|x_{i},s_{i}\right).\end{multline*}

\end{definitn}

\begin{definitn}
\label{def:FSISI}A \emph{finite-state intersymbol interference channel}
(FSISIC) is a FSC whose next state is a deterministic function, $\eta(x,s)$,
of the current state $s$ and input $x$. Mathematically, this implies
that\[
\sum_{y\in\mathcal{Y}}P\left(y,s'|x,s\right)=\begin{cases}
1 & \mbox{if}\,\eta(x,s)=s'\\
0 & \mbox{otherwise}\end{cases}.\]

\end{definitn}

\begin{definitn}
The \emph{dicode channel} (DIC) is a binary-input FSISI channel with
a linear response of $G(z)=1-z^{-1}$ and Gaussian noise. If the input
bits are differentially encoded prior to transmission, then the resulting
channel is called the \emph{precoded dicode channel} (pDIC). The state
diagrams of these two channels are shown in Fig. \ref{fig:dic}. \vspace{0mm}
\end{definitn}
\begin{figure}[t]
\begin{centering}
\includegraphics[scale=0.5]{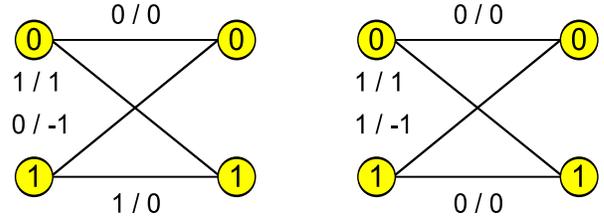}\vspace{-2mm}
\par\end{centering}

\caption{\label{fig:dic}State diagrams for noiseless dicode channel with and
without precoding. The edges are labeled by the input/output pair.}

\end{figure}

\section{New Results: LP Joint-Decoding \label{sec:Joint-LPD}}

Now, we describe \textbf{the LP joint decoder} in terms of the trellis
of the FSC and the checks in the binary linear code. Let $n$ be the
length of the code and $\mathbf{y}$ be the received sequence. The
trellis consists of $(n+1)|\mathcal{S}|$ vertices (i.e., one for
each state and time) and a set $\mathcal{E}$ of at most $2n|\mathcal{S}|^{2}$
edges (not $2n|\mathcal{S}|$, i.e., one edge for each input-labeled
state transition and time). For each edge $e\in\mathcal{E}$, the
functions $t(e)\rightarrow\{1,\ldots,n\}$,~$s(e)\rightarrow\mathcal{S}$,
$s'(e)\rightarrow\mathcal{S}$,~$x(e)\rightarrow\{0,1\}$,~and $a(e)\rightarrow\mathcal{A}$
map this edge to its respective time index, initial state, final state,
input bit, and noiseless output symbol. The LP formulation requires
one variable for each edge $e\in\mathcal{E}$, and we denote that
variable by $g(e)$. Likewise, the LP decoder requires one cost variable
for each edge and we use the branch metric \begin{align*}
b(e)\negthinspace\negthinspace\triangleq\negthinspace\negthinspace\begin{cases}
\negthinspace\negthinspace-\mbox{ln}P\negthinspace\left(y_{t(e)},s'(e)|x(e),s(e)\right)\negthinspace\negthinspace\negthinspace\negthinspace & \mbox{if}\, t(e)>1\\
\negthinspace\negthinspace-\mbox{ln}\left(P\negthinspace\left(y_{t(e)},s'(e)|x(e),s(e)\right)\negthinspace P\negthinspace\left(s(e)\right)\right)\negthinspace\negthinspace & \mbox{if}\, t(e)=1\end{cases}.\end{align*}
 
\begin{definitn}
The \emph{trellis polytope} $\mathcal{T}$ enforces the flow conservation
constraints for channel decoder. The flow constraint for state $j$
at time $i$ is given by\[
\mathcal{F}_{ij}\triangleq\left\{ g(\cdot)\in[0,1]^{\left|\mathcal{E}\right|}\left|\,\sum_{\substack{e\in\mathcal{E}:\\
t(e)=i,\\
s'(e)=j}
}g(e)=\!\!\!\!\sum_{\substack{e\in\mathcal{E}:\\
t(e)=i+1,\\
s(e)=j}
}g(e)\right.\right\} .\]
Using this, the \emph{trellis polytope} $\mathcal{T}$ is given by\[
\mathcal{T}\triangleq\left\{ g(\cdot)\in\bigcap_{\substack{i=1,\ldots n-1\\
j\in\mathcal{S}}
}\mathcal{F}_{ij}\left|\sum_{\substack{e\in\mathcal{E}:\\
t(e)=1}
}g(e)=1\right.\right\} .\]
\end{definitn}
\begin{thm}
[\cite{Feldman-2003}] \label{thm:TLP} Finding the ML edge-path through
a weighted trellis is equivalent to solving the minimum-cost flow
LP \[
\argmin_{g(\cdot)\in\mathcal{T}}\sum_{e\in\mathcal{E}}g(e)b(e)\]
and the optimum $g(\cdot)$ must be integral (i.e., $g(\cdot)\in\left\{ 0,1\right\} ^{\left|\mathcal{E}\right|}$)
unless there are ties.\end{thm}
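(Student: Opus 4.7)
The plan is to show two things: (i) integral points of $\mathcal{T}$ are in one-to-one correspondence with edge-paths through the trellis and the LP cost on such a point equals the path metric, and (ii) every vertex of $\mathcal{T}$ is integral, so the LP attains its optimum at an integral point (unless the minimizing face has dimension $\geq 1$, i.e., a tie).

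First I would establish the combinatorial correspondence. Given any edge-path $\pi$ from time $1$ to time $n$, define $g_{\pi}(e)=1$ if $e\in\pi$ and $0$ otherwise. Flow conservation at each intermediate state vertex $(i,j)$ holds because exactly one edge of $\pi$ enters $(i,j)$ and exactly one leaves, and the time-$1$ constraint $\sum_{t(e)=1}g(e)=1$ holds because $\pi$ uses exactly one starting edge. Conversely, any $0/1$ vector $g(\cdot)\in\mathcal{T}$ must select exactly one edge at time $1$ and, by conservation, induces a unique sequence of edges forming a path. Moreover, by definition of $b(e)$ and the FSC factorization of $P(y_1^n,s_2^{n+1}|x_1^n,s_1)P(s_1)$, the sum $\sum_{e}g_{\pi}(e)b(e)$ telescopes to $-\ln P(\mathbf{y},\mathbf{s}_{2}^{n+1},\mathbf{x}|s_1)-\ln P(s_1)$ along $\pi$, i.e., the negative log-likelihood of the path. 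Hence minimizing the LP cost over integral $g(\cdot)\in\mathcal{T}$ is exactly ML path decoding.

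Next I would prove that every vertex of $\mathcal{T}$ is integral. The trellis is a directed acyclic graph, and the constraints of $\mathcal{T}$ are precisely a node-arc incidence system with box bounds $g(e)\in[0,1]$: flow conservation at each intermediate $(i,j)$ plus a unit-flow source condition at $t=1$ (equivalently, one can augment the trellis with a super-source feeding all $t=1$ edges and a super-sink consuming all $t=n$ edges to obtain a standard single-commodity min-cost flow LP). The node-arc incidence matrix of a directed graph is totally unimodular, and the right-hand sides $\{0,1\}$ are integral, so by the standard TU/integrality theorem every basic feasible solution (in particular every vertex) has entries in $\{0,1\}$. Therefore there exists an optimal vertex $g^{*}(\cdot)\in\{0,1\}^{|\mathcal{E}|}$, which by step one corresponds to a path achieving the ML metric.

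Finally I would handle uniqueness. If the optimal face of the LP is a single vertex, then $g^{*}(\cdot)$ is forced to be integral by the preceding argument; if multiple integral vertices attain the minimum, convex combinations are also optimal and a fractional minimizer may exist, accounting for the tie clause. The main (and essentially only) technical point is the total unimodularity step; everything else is bookkeeping on the trellis. I would simply cite the standard TU property of the incidence matrix of a directed graph rather than re-derive it, since the result is classical and already invoked in the same form in \cite{Feldman-2003}.
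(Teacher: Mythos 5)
Your proof is correct. Note, however, that the paper offers no proof of this theorem at all --- it is stated as a known result imported from \cite{Feldman-2003} --- so there is nothing internal to compare against; your argument (bijection between integral points of $\mathcal{T}$ and trellis edge-paths with matching cost, plus integrality of the vertices of $\mathcal{T}$ via total unimodularity of the node-arc incidence system of the acyclic trellis with a super-source/super-sink) is exactly the standard proof of this classical min-cost-flow fact, and the handling of the ``unless there are ties'' clause via the dimension of the optimal face is right. The only quibble is notational: the telescoped cost should read $-\ln\bigl(P(y_1^n,s_2^{n+1}|x_1^n,s_1)P(s_1)\bigr)$, i.e., conditioned on the input sequence rather than joint with it.
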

\begin{definitn}
\label{def:Projection}Let \emph{$\mathcal{Q}$ }be the projection
of $g(\cdot)$ onto the input vector $\mathbf{f}=\left(f_{1},\ldots,f_{n}\right)\in[0,1]^{n}$
where $\mathbf{f}=\mathcal{Q}\mathbf{g}$ and \[
f_{i}=\sum_{\substack{e\in\mathcal{E}:\, t(e)=i,\, x(e)=1}
}g(e).\]

\end{definitn}

\begin{definitn}
\label{def:TPOLY}The \emph{trellis-wise relaxed polytope} $\mathcal{P}_{\mathcal{T}}(H)$
for $\mathcal{P}(H)$ is defined by \vspace{-1mm}

\[
\mathcal{P}_{\mathcal{T}}(H)\triangleq\left\{ g(\cdot)\in\mathcal{T}\left|\mathcal{Q}g\in\mathcal{P}(H)\right.\right\} .\]

\end{definitn}

\begin{definitn}
\label{prop:TCWPOLY}The \emph{set of trellis-wise codewords} $\mathcal{C}_{\mathcal{T}}$
for $\mathcal{C}$ is defined as\vspace{-1mm} \[
\mathcal{C}_{\mathcal{T}}\triangleq\left\{ g(\cdot)\in\mathcal{P}_{\mathcal{T}}(H)\left|g(\cdot)\in\left\{ 0,1\right\} ^{\left|\mathcal{E}\right|}\right.\right\} .\]
\end{definitn}
\begin{thm}
\label{thm:JointLP}\textbf{The LP joint decoder} computes\[
\argmin_{g(\cdot)\in\mathcal{P}_{\mathcal{T}}(H)}\sum_{e\in\mathcal{E}}g(e)b(e)\]
and outputs a joint ML edge-path if $g(\cdot)$ is integral.\end{thm}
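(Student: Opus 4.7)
The plan is to verify three things in sequence: (a) the integral feasible points of $\mathcal{P}_{\mathcal{T}}(H)$ are exactly the indicator vectors of single trellis edge-paths whose input-bit labels form a codeword of $\mathcal{C}$; (b) for any such integral $g(\cdot)$, the linear objective $\sum_{e} g(e) b(e)$ equals the negative log of the joint probability of the associated input/state/output triple under the FSC model; and (c) minimizing this negative log over $\mathcal{C}_{\mathcal{T}}$ is therefore equivalent to selecting a joint ML edge-path.

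For step (a), I would invoke Theorem \ref{thm:TLP} to say that integrality together with the flow-conservation and unit-inflow constraints defining $\mathcal{T}$ force $g(\cdot)$ to be the indicator of a single path through the trellis. The additional constraint $\mathcal{Q}g\in\mathcal{P}(H)$ in Definition \ref{def:TPOLY} then requires the projection $\mathbf{f}=\mathcal{Q}g$ to lie in $\mathcal{P}(H)$. Because $g(\cdot)$ is integral, $\mathbf{f}$ is $\{0,1\}$-valued, and by Feldman's characterization the integral points of $\mathcal{P}(H)$ are exactly the codewords of $\mathcal{C}$. Hence the trellis path's input labels form a codeword, so the integral element lies in $\mathcal{C}_{\mathcal{T}}$ as defined in Definition \ref{prop:TCWPOLY}.

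For step (b), I would unroll the objective: if $g(e)=1$ precisely on the edges $e_1,\ldots,e_n$ of a path with $t(e_i)=i$, $s(e_i)=s_i$, $s'(e_i)=s_{i+1}$, $x(e_i)=x_i$, then
\[
\sum_{e\in\mathcal{E}} g(e)b(e) = -\ln P(s_1) - \sum_{i=1}^{n} \ln P\bigl(y_i, s_{i+1}\,|\,x_i, s_i\bigr) = -\ln P\bigl(y_1^n, s_1^{n+1}, x_1^n\bigr),
\]
where the first equality uses that the $t(e)=1$ branch metric absorbs the prior $P(s_1)$, and the second uses the product factorization from the FSC definition. Step (c) is then immediate: since $\mathbf{y}$ is fixed, minimizing $-\ln P(y_1^n, s_1^{n+1}, x_1^n)$ over $\mathcal{C}_{\mathcal{T}}$ is exactly maximizing $P(x_1^n, s_1^{n+1}\,|\,y_1^n)$ over edge-paths whose input labels form a valid codeword, which is precisely the joint ML edge-path criterion.

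There is no really hard step here; the proof is essentially a bookkeeping exercise combining Theorem \ref{thm:TLP} with Feldman's LP relaxation of $\mathcal{C}$. The one subtlety that I would be careful about is the treatment of the initial state: the piecewise branch metric is defined so that only edges with $t(e)=1$ carry the $-\ln P(s(e))$ term, and I must check that the unit-inflow constraint $\sum_{t(e)=1}g(e)=1$ in the definition of $\mathcal{T}$ causes this prior to appear exactly once in the cost, so that the objective matches the joint negative log-likelihood without stray additive constants that could depend on the chosen path.
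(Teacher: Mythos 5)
Your proposal is correct and follows essentially the same route as the paper: both identify the integral points of $\mathcal{P}_{\mathcal{T}}(H)$ (i.e., $\mathcal{C}_{\mathcal{T}}$) with trellis edge-paths whose input labels form a codeword, and both show that the linear cost on such a point equals $-\ln\bigl(P(y_1^n,s_2^{n+1}|x_1^n,s_1)P(s_1)\bigr)$, the paper simply running the chain of equalities from the ML edge-path criterion down to the LP over $\mathcal{C}_{\mathcal{T}}$ and then relaxing to $\mathcal{P}_{\mathcal{T}}(H)$, whereas you read the same argument in the opposite direction. Your extra care that the $-\ln P(s(e))$ term appears exactly once via the unit-inflow constraint is a detail the paper dispatches with a one-line remark.
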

\begin{proof}
Let $\mathcal{V}$ be the set of valid input/state sequence pairs.
For a given $\mathbf{y}$, the ML edge-path decoder computes \vspace{-1mm}\begin{align*}
 & \argmax_{(x_{1}^{n},s_{1}^{n+1})\in\mathcal{V}}P(y_{1}^{n},s_{2}^{n+1}|x_{1}^{n},s_{1})P(s_{1})\\
 & =\argmax_{g(\cdot)\in\mathcal{C}_{\mathcal{T}}}P(s_{1})\prod_{\substack{e\in\mathcal{E}:\, g(e)=1}
}P\left(y_{t(e)},s'(e)|x(e),s(e)\right)\\
 & =\argmin_{g(\cdot)\in\mathcal{C}_{\mathcal{T}}}\sum_{\substack{e\in\mathcal{E}:\, g(e)=1}
}b(e)\\
 & =\argmin_{g(\cdot)\in\mathcal{C}_{\mathcal{T}}}\sum_{e\in\mathcal{E}}g(e)b(e),\end{align*}
where ties are resolved in a systematic manner and $b(e)$ at $t(e)=1$
has an additional initial state term as $\mbox{-ln}P\left(s(e)\right)$.
By relaxing $\mathcal{C}_{\mathcal{T}}$ into $\mathcal{P}_{\mathcal{T}}(H)$,
we obtain the desired result.\end{proof}
\begin{cor}
\label{cor:JointLPISI}For a FSISIC, the LP joint decoder outputs
a joint ML codeword if $g(\cdot)$ is integral.\end{cor}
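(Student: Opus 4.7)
The plan is to apply Theorem \ref{thm:JointLP} to obtain a joint ML edge-path, and then use the deterministic state-transition property of a FSISIC (Definition \ref{def:FSISI}) to upgrade this into a joint ML codeword. Two things need to be checked: that the projection $\mathbf{f}=\mathcal{Q}g$ lies in $\mathcal{C}$ (not merely in $\mathcal{P}(H)$), and that this codeword actually maximizes the appropriate joint likelihood.

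First, I would argue that an integral $g(\cdot)\in\mathcal{P}_{\mathcal{T}}(H)$ projects to a binary codeword. The flow-conservation constraints in $\mathcal{T}$, together with the unit-flow condition $\sum_{e:\,t(e)=1}g(e)=1$ and $g(\cdot)\in\{0,1\}^{|\mathcal{E}|}$, inductively force $g(\cdot)$ to activate exactly one edge at each time index; that is, $g(\cdot)$ is the indicator vector of a single trellis path. Consequently $f_i=\sum_{e:\,t(e)=i,\,x(e)=1}g(e)\in\{0,1\}$ for every $i$. Since $g\in\mathcal{P}_{\mathcal{T}}(H)$ already implies $\mathbf{f}\in\mathcal{P}(H)$ by Definition \ref{def:TPOLY}, and the integral points of $\mathcal{P}(H)$ are by construction exactly $\mathcal{C}$, we conclude $\mathbf{f}\in\mathcal{C}$.

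Second, I would confirm that this codeword is ML in the joint sense. For a FSISIC the update rule $\eta$ determines the state sequence $s_2^{n+1}$ uniquely from $(x_1^n,s_1)$, so the set $\mathcal{V}$ appearing in the proof of Theorem \ref{thm:JointLP} is indexed by the pairs $(x_1^n,s_1)$, each one carrying likelihood $P(y_1^n|x_1^n,s_1)P(s_1)$ rather than a genuine sum over state sequences. The minimum of $\sum_e g(e)b(e)$ over $\mathcal{C}_{\mathcal{T}}$ therefore selects the pair $(x_1^n,s_1)$ that maximizes this likelihood, and the projection $\mathbf{f}$ of the corresponding path is the associated joint ML codeword.

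The main subtlety, rather than a technical obstacle, is interpreting ``joint ML codeword'' correctly: for a general FSC the ML edge-path maximizes over state sequences instead of marginalizing over them, and so even an integral LP solution need not project to an ML codeword. The deterministic-transition hypothesis built into the FSISIC definition collapses the state maximization to a single term and eliminates exactly this gap, which is precisely why the corollary is restricted to FSISICs.
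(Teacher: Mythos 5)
Your proposal is correct and follows essentially the same route as the paper: the key step in both is that the FSISIC's deterministic transition function collapses the sum over state sequences in the joint ML codeword criterion to a single term, so that each codeword corresponds to a unique edge-path and the ML edge-path of Theorem \ref{thm:JointLP} is also the ML codeword. Your additional verification that an integral point of $\mathcal{P}_{\mathcal{T}}(H)$ projects under $\mathcal{Q}$ to an element of $\mathcal{C}$ is left implicit in the paper but is a welcome piece of extra care, not a different argument.
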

\begin{proof}
The joint ML decoder for codewords computes\vspace{-1mm}\begin{align*}
 & \argmax_{x_{1}^{n}\in\mathcal{C}}\sum_{s_{2}^{n+1}\in\mathcal{S}^{n}}P(y_{1}^{n},s_{2}^{n+1}|x_{1}^{n},s_{1})P(s_{1})\\
 & =\argmax_{x_{1}^{n}\in\mathcal{C}}\sum_{s_{2}^{n+1}\in\mathcal{S}^{n}}\prod_{i=1}^{n}P(y_{i},s_{i+1}|x_{i},s_{i})P(s_{1})\\
 & \stackrel{(a)}{=}\argmax_{x_{1}^{n}\in\mathcal{C}}\prod_{i=1}^{n}P\left(y_{i},\eta\left(x_{i},s_{i}\right)\big|x_{i},s_{i}\right)P(s_{1})\\
 & \stackrel{(b)}{=}\argmin_{g(\cdot)\in\mathcal{C}_{\mathcal{T}}}\sum_{e\in\mathcal{E}}g(e)b(e),\end{align*}
where $(a)$ follows from Defn. \ref{def:FSISI} and $(b)$ holds
because each input sequence defines a unique edge-path. Therefore,
the LP joint-decoder outputs an ML codeword if $g(\cdot)$ is integral.\end{proof}
\begin{remrk}
\label{rem:MLedgepath} If the channel is not a FSISIC (e.g., finite-state
fading channel), the integer valued solutions of the LP joint-decoder
are ML edge-paths and not necessarily ML codewords. This occurs because
the decoder is unable to sum to the probability of the multiple edge-paths
associated with the same codeword (e.g., if multiple distinct edge-paths
are associated with the same input labels).\vspace{-2mm}
\end{remrk}

\section{Joint-Decoding Pseudo-codewords \label{sec:JD-PCW}}

Pseudo-codewords have been observed and given names by a number of
authors \cite{Wiberg-96}\cite{Di-it02}\cite{Richardson-aller03},
but the simplest general definition was provided by Feldman, et al.
in the context of LP decoding of parity-check codes \cite{Feldman-it05}.
One nice property of the LP decoder is that it always returns an integer
codeword or a fractional pseudo-codeword. Vontobel and Koetter have
shown that a very similar set of pseudo-codewords also affect message-passing
decoders, and that they are essentially fractional codewords that
cannot be distinguished from codewords using only local constraints
\cite{Vontobel-itsub07}. We define JD-PCW this section because of
their primary importance in the characterization of code performance
at very low error rates. 
\begin{definitn}
\label{def:TCW}The output of the LP joint decoder is a\textbf{ }\emph{trellis-wise
(ML) codeword} (TCW) if $g(e)\in\{0,\,1\}$ for all $e\in\mathcal{E}$.
Otherwise, if $g(e)\in(0,\,1)$ for some $e\in\mathcal{E}$, then
the solution is called a \emph{joint-decoding trellis-wise pseudo-codeword}
(JD-TPCW) and the decoder outputs {}``failure''.
\end{definitn}

\begin{definitn}
\label{def:SCW}Any TCW $\mathbf{g}$ can be projected onto a \emph{(symbol-wise)
codeword} (SCW)%
{} $\mathbf{f}=\mathcal{Q}\mathbf{g}$. Likewise, any JD-TPCW $\mathbf{g}$
can be projected onto a \emph{joint-decoding symbolwise pseudo-codeword}
(JD-SPCW) %
{} $\mathbf{f}=\mathcal{Q}\mathbf{g}.$ \end{definitn}
\begin{remrk}
\label{rem:MLcertificate}For FSISIC, the LP joint decoder has the
\emph{ML certificate }property; if the decoder outputs a SCW, then
it is guaranteed to be the ML codeword (see Cor. \ref{cor:JointLPISI}).
\end{remrk}

\begin{definitn}
Any TCW can be projected onto a \emph{symbol-wise signal-space codeword}
(SSCW) %
{} and any JD-TPCW $\mathbf{g}$ can be projected onto a\emph{ joint-decoding
symbol-wise signal-space pseudo-codeword} (JD-SSPCW) $\mathbf{p}=\left(p_{1},\ldots,p_{n}\right)$
by averaging the components with \vspace{-1mm} \[
p_{i}=\sum_{\substack{e\in\mathcal{E}:\, t(e)=i}
}g(e)a(e).\]

\end{definitn}

\begin{example}
Consider the single parity-check code SPC(3,2). Over precoded dicode
channel (starts in zero state) with AWGN, this code has five joint-decoding
pseudo-codewords. A simulation was performed for joint-decoding of
the SPC(3,2) on the pDIC trellis and the set of JD-TPCW, by ordering
the trellis edges appropriately, was found to be \begin{align*}
\{(0\,1\,0\,0;0\,0\,.5\,.5;0\,.5\,.5\,0), & (.5\,.5\,0\,0;.5\,0\,0\,.5;0\,1\,0\,0),\\
(.5\,.5\,0\,0;0\,.5\,.5\,0;0\,0\,1\,0), & (1\,0\,0\,0;.5\,.5\,0\,0;0\,.5\,.5\,0),\\
(.5\,.5\,0\,0;.5\,0\,0\,0;0\,.5\,.5\,0)\}.\end{align*}
Using $\mathcal{Q}$ to project them into $\mathcal{P}(H)$, we get
the corresponding set of JD-SPCW \[
\{(1,.5,.5),\,(.5,.5,1),\,(.5,.5,0),\,(0,.5,.5),\,(.5,0,.5)\}.\]
\vspace{-2mm}
\end{example}

\section{Union Bound for LP Joint-Decoding \label{sec:Bound}}

Now that we have defined the relevant pseudo-codewords, we turn our
attentions to the question of {}``how bad'' a certain pseudo-codeword
is, i.e., we want to quantify pairwise error probabilities. In fact,
we will use the insights gained in the previous section to obtain
a union bound on the decoder word error probability (as a tight approximation)
to analyze the performance of the proposed LP-joint decoder. Toward
this end, let's consider the pairwise error event between a SSCW $\mathbf{c}$
and a JD-SSPCW $\mathbf{p}$ first.
\begin{thm}
\label{thm:PEE}A necessary and sufficient condition for the pairwise
decoding error between a SSCW $\mathbf{c}$ and a JD-SSPCW $\mathbf{p}$
is\vspace{-1mm} \[
\sum_{\substack{e\in\mathcal{E}}
}g(e)b(e)\leq\sum_{\substack{e\in\mathcal{E}}
}\tilde{g}(e)b(e),\]
\vspace{-1mm}where $g(\cdot)\in\mathcal{P}_{\mathcal{T}}(H)$ and
$\tilde{g}(\cdot)\in\mathcal{C}_{\mathcal{T}}$ are the LP variables
for $\mathbf{p}$ and $\mathbf{c}$ respectively. 
\end{thm}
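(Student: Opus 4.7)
The plan is to derive this condition directly from the LP formulation given in Theorem~\ref{thm:JointLP}, treating the pairwise error event as a simple two-point comparison within the feasible polytope. First I would observe that the trellis-wise codeword $\tilde{g}(\cdot) \in \mathcal{C}_{\mathcal{T}}$ representing $\mathbf{c}$ is feasible for the LP since $\mathcal{C}_{\mathcal{T}} \subseteq \mathcal{P}_{\mathcal{T}}(H)$ by Defn.~\ref{prop:TCWPOLY}, and that the LP variables $g(\cdot) \in \mathcal{P}_{\mathcal{T}}(H)$ associated with $\mathbf{p}$ are feasible by construction. Because $b(e)$ is defined as a (shifted) negative log-likelihood of the corresponding edge, the LP objective $\sum_e g(e)b(e)$ evaluated at any feasible point is precisely the negative log-likelihood score that the decoder attempts to minimize.

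Next I would argue both directions of the iff. For necessity, suppose a pairwise error between $\mathbf{c}$ and $\mathbf{p}$ occurs. By Thm.~\ref{thm:JointLP}, the LP joint decoder selects the feasible point of minimum cost, so if $\mathbf{p}$ is preferred to $\mathbf{c}$ then the cost at $g(\cdot)$ cannot exceed the cost at $\tilde{g}(\cdot)$; this is exactly the stated inequality (with ties resolved systematically, as in the proof of Thm.~\ref{thm:JointLP}). For sufficiency, if the inequality holds, then $\mathbf{p}$ is at least as favorable as $\mathbf{c}$ under the LP objective, so the decoder's choice does not return $\mathbf{c}$, which by definition is a pairwise error event relative to the transmitted $\mathbf{c}$.

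The argument is essentially a restatement of LP optimality in a pairwise setting, so the main obstacle is not analytic difficulty but rather conceptual bookkeeping: one must verify that the edge-level variables $g(\cdot),\tilde g(\cdot)$ genuinely project under $\mathcal{Q}$ (Defn.~\ref{def:Projection}) and the signal-space averaging to $\mathbf{p}$ and $\mathbf{c}$, so that the inequality in trellis variables is the correct representation of the pairwise event at the symbol/signal-space level. Once this identification is in place, there is no further content to prove; the real work is deferred to Sec.~\ref{sec:Bound}, where this characterization is used to compute $\Pr\bigl(\sum_e g(e)b(e) \leq \sum_e \tilde g(e)b(e)\bigr)$ via a Gaussian tail bound and to sum over all JD-PCWs to obtain the union bound.
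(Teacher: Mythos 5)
Your proposal is correct, and it matches what the paper intends: the paper actually states Theorem~\ref{thm:PEE} with no proof at all, treating it as immediate from the LP formulation of Theorem~\ref{thm:JointLP}, and your argument --- that the pairwise error event is by definition the event that the feasible point $g(\cdot)$ for $\mathbf{p}$ scores no worse than the feasible point $\tilde g(\cdot)$ for $\mathbf{c}$ under the common objective $\sum_e g(e)b(e)$, with the non-strict inequality absorbing ties --- is exactly the intended (essentially definitional) content. Your closing remark correctly identifies where the real work lies, namely in Theorem~\ref{thm:PEP}, where this inequality is turned into a Gaussian tail probability.
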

%
{}For the moment, let $\mathbf{c}$ be the SSCW of FSISIC to an AWGN
channel whose output sequence is $\mathbf{y}=\mathbf{c}+\mathbf{v}$,
where $\mathbf{v}=(v_{1},\ldots,v_{n})$ is an i.i.d. Gaussian sequence
with mean $0$ and variance $\sigma^{2}$. We will show that each
pairwise probability has a simple closed-form expression that depends
only on a generalized squared Euclidean distance $d_{gen}^{2}\left(\mathbf{c},\,\mathbf{p}\right)$
and the noise variance $\sigma^{2}.$ The next few definitions and
theorems can be seen as a generalization of \cite{Forney-ima99} and
a special case of the more general formulation in \cite{Flanagan-arxiv09}.
\begin{thm}
\label{thm:LPAWGN} Let $\mathbf{y}$ be the output of a FSISIC with
zero-mean AWGN whose variance is $\sigma^{2}$ per output. Then, the
LP joint decoder is equivalent to \[
\argmin_{g(\cdot)\in\mathcal{P}_{\mathcal{T}}(H)}\sum_{e\in\mathcal{E}}g(e)\left(y_{t(e)}-a(e)\right)^{2}.\]
\end{thm}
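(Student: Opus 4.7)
The plan is to observe that the branch metric $b(e)$ from the LP joint decoder coincides, up to a positive multiplicative constant and an additive constant that is the same for every feasible $g(\cdot)\in\mathcal{P}_{\mathcal{T}}(H)$, with the squared Euclidean cost $(y_{t(e)}-a(e))^{2}$. Since the argmin of a linear functional over a polytope is invariant under multiplication by a positive scalar and under the addition of any quantity that is constant on the feasible set, the two linear programs will share the same optimizer.

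First I would substitute the FSISIC structure into the definition of $b(e)$. By Definition \ref{def:FSISI}, for every trellis edge $e$ (i.e.\ one with $s'(e)=\eta(x(e),s(e))$), the joint transition probability collapses to the output law $P(y_{t(e)}\,|\,x(e),s(e))$. Under zero-mean AWGN with variance $\sigma^{2}$, this output density is $\mathcal{N}(a(e),\sigma^{2})$, so
\[
b(e) \;=\; \frac{1}{2\sigma^{2}}\bigl(y_{t(e)}-a(e)\bigr)^{2} \;+\; \tfrac{1}{2}\ln\!\bigl(2\pi\sigma^{2}\bigr)
\]
for $t(e)>1$, with the analogous expression plus $-\ln P(s(e))$ when $t(e)=1$.

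Next I would invoke flow conservation on $\mathcal{T}$. The initial normalization $\sum_{e:t(e)=1}g(e)=1$ together with the flow equations $\mathcal{F}_{ij}$ propagates to give $\sum_{e:t(e)=i}g(e)=1$ for every $i=1,\ldots,n$, whence $\sum_{e\in\mathcal{E}}g(e)=n$ for every $g(\cdot)\in\mathcal{P}_{\mathcal{T}}(H)\subseteq\mathcal{T}$. Consequently the additive piece $\tfrac{1}{2}\ln(2\pi\sigma^{2})$ contributes the same aggregate cost $\tfrac{n}{2}\ln(2\pi\sigma^{2})$ on every feasible point. Similarly, the initial-state prior term, taken as uniform over $\mathcal{S}$ (or, under the deterministic-start convention used elsewhere in the paper, as a hard constraint encoded in $\mathcal{E}$), contributes a constant to the objective on the feasible set. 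Thus
\[
\sum_{e\in\mathcal{E}} g(e)\,b(e) \;=\; \frac{1}{2\sigma^{2}}\sum_{e\in\mathcal{E}} g(e)\bigl(y_{t(e)}-a(e)\bigr)^{2} \;+\; K,
\]
with $K$ independent of $g(\cdot)$ on $\mathcal{P}_{\mathcal{T}}(H)$. Dropping $K$ and dividing by the positive constant $1/(2\sigma^{2})$ preserves the argmin and yields the stated reformulation.

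The only mildly delicate point is the accounting for the initial-state term and the multiplicative $1/(2\sigma^{2})$ factor; once flow conservation gives the crucial identity $\sum_{e} g(e)=n$, the argument is purely bookkeeping and no real obstacle remains. This can be viewed as a specialization of the more general equivalence alluded to in \cite{Flanagan-arxiv09} and as a trellis-wise generalization of \cite{Forney-ima99}.
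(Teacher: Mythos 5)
Your proof is correct and follows essentially the same route as the paper's: write the branch metric as the negative log of a Gaussian density and observe that the resulting affine relation between the two objectives preserves the argmin over $\mathcal{P}_{\mathcal{T}}(H)$. You in fact supply details the paper leaves implicit --- the flow-conservation identity $\sum_{e\in\mathcal{E}}g(e)=n$ that makes the additive constants uniform over the feasible set, and the handling of the initial-state term --- so your version is, if anything, more complete.
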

\begin{proof}
For each edge $e\in\mathcal{E}$, the output $y_{t(e)}$ is Gaussian
with mean $a(e)$ and variance $\sigma^{2}$, so we have $P\left(y_{t(e)},s'(e)|x(e),s(e)\right)\sim\mathcal{N}\left(a(e),\,\sigma^{2}\right)$.
Therefore, the LP joint-decoder computes\vspace{-1mm}\begin{align*}
 & \argmin_{g(\cdot)\in\mathcal{P}_{\mathcal{T}}(H)}\sum_{e\in\mathcal{E}}g(e)b(e)=\!\!\argmin_{g(\cdot)\in\mathcal{P}_{\mathcal{T}}(H)}\sum_{e\in\mathcal{E}}g(e)\left(y_{t(e)}\!-\! a(e)\right)^{2}\!.\end{align*}

\end{proof}
%
{}
\begin{definitn}
\label{def:dist}Let $\mathbf{c}$ be a SSCW and $\mathbf{p}$ a JD-SSPCW.
Then the \emph{generalized squared Euclidean distance} between $\mathbf{c}$
and $\mathbf{p}$ can be defined in terms of their trellis-wise descriptions
by \[
d_{gen}^{2}\left(\mathbf{c},\,\mathbf{p}\right)\triangleq\frac{\left(\left\Vert \mathbf{d}\right\Vert ^{2}+\sigma_{p}^{2}\right)^{2}}{\left\Vert \mathbf{d}\right\Vert ^{2}}\]
\vspace{-1mm}where \begin{align*}
\left\Vert \mathbf{d}\right\Vert ^{2} & \triangleq\sum_{i=1}^{n}\left(c_{i}-p_{i}\right)^{2},\,\sigma_{p}^{2}\triangleq\sum_{\substack{e\in\mathcal{E}}
}g(e)a^{2}(e)-\sum_{i=1}^{n}p_{i}^{2}.\end{align*}
\vspace{-1mm}\end{definitn}
\begin{thm}
\label{thm:PEP}The pairwise error probability between a SSCW $\mathbf{c}$
and a JD-SSPCW $\mathbf{p}$ is\vspace{-1mm} \[
\mbox{Pr}\left(\mathbf{c}\rightarrow\mathbf{p}\right)=Q\left(\frac{d_{gen}\left(\mathbf{c},\,\mathbf{p}\right)}{2\sigma}\right).\]
\vspace{-1mm}\end{thm}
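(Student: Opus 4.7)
The plan is to combine Theorem~\ref{thm:PEE} with the AWGN reformulation of Theorem~\ref{thm:LPAWGN} to rewrite the pairwise error event as a linear inequality in the noise vector $\mathbf{v}$, and then recognize the resulting half-space event as the standard one-sided Gaussian tail giving the $Q$-function.

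First, I would note that under AWGN the branch metric $b(e)$ is proportional (up to additive constants that cancel) to $(y_{t(e)}-a(e))^2$, so Theorem~\ref{thm:PEE} says $\mathbf{c}\to\mathbf{p}$ iff
\[
\sum_{e\in\mathcal{E}} g(e)\bigl(y_{t(e)}-a(e)\bigr)^2 \;\le\; \sum_{e\in\mathcal{E}} \tilde g(e)\bigl(y_{t(e)}-a(e)\bigr)^2,
\]
where $\tilde g$ is the integral trellis-wise representative of $\mathbf{c}$, so the right side collapses to $\sum_i (y_i-c_i)^2 = \sum_i v_i^2$. Expanding the left side as $\sum_e g(e)y_{t(e)}^2 - 2\sum_e g(e)y_{t(e)}a(e) + \sum_e g(e)a(e)^2$, the crucial algebraic observation is that flow conservation in $\mathcal{T}$ forces $\sum_{e:t(e)=i} g(e)=1$ for every $i$. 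Hence the first sum equals $\sum_i y_i^2$, and grouping edges by time index in the middle sum together with Defn.~\ref{def:Projection} (the signal-space projection $p_i=\sum_{e:t(e)=i} g(e)a(e)$) gives $\sum_e g(e)y_{t(e)}a(e)=\sum_i y_i p_i$.

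Substituting $y_i=c_i+v_i$ and canceling the common $\sum_i y_i^2$ term, the error condition reduces to
\[
-2\sum_{i=1}^{n} v_i(c_i-p_i) \;\ge\; \sum_{i=1}^{n}(c_i-p_i)^2 \;+\; \Bigl(\sum_{e} g(e)a^2(e)-\sum_{i} p_i^2\Bigr),
\]
i.e., $-2\langle \mathbf{v},\mathbf{d}\rangle \ge \|\mathbf{d}\|^2 + \sigma_p^2$ with $\mathbf{d}=\mathbf{c}-\mathbf{p}$ and $\sigma_p^2$ as in Defn.~\ref{def:dist}. The left-hand side is a zero-mean Gaussian with variance $4\sigma^2\|\mathbf{d}\|^2$, so the probability is
\[
Q\!\left(\frac{\|\mathbf{d}\|^2+\sigma_p^2}{2\sigma\|\mathbf{d}\|}\right) \;=\; Q\!\left(\frac{d_{gen}(\mathbf{c},\mathbf{p})}{2\sigma}\right),
\]
using the very definition of $d_{gen}^2$ in Defn.~\ref{def:dist}.

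The only subtle step, and the one I expect to be the main obstacle when writing it out carefully, is justifying the two identities $\sum_e g(e)y_{t(e)}^2=\sum_i y_i^2$ and $\sum_e g(e)y_{t(e)}a(e)=\sum_i y_i p_i$ for a fractional $g(\cdot)\in\mathcal{P}_\mathcal{T}(H)$ rather than an integral one. Both rely only on the time-slice normalization $\sum_{e:t(e)=i}g(e)=1$, which follows inductively from the initial-time constraint $\sum_{e:t(e)=1}g(e)=1$ and the flow-conservation equalities $\mathcal{F}_{ij}$ summed over $j\in\mathcal{S}$; once that is established the remaining computation is mechanical. Everything else is bookkeeping: collecting cross terms, recognizing the combination $\sum_e g(e)a^2(e)-\sum_i p_i^2$ as $\sigma_p^2$, and invoking the standard Gaussian tail formula.
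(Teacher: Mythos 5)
Your proposal is correct and follows essentially the same route as the paper's proof: both expand the quadratic pairwise-error condition from Theorems \ref{thm:PEE} and \ref{thm:LPAWGN}, use the time-slice normalization $\sum_{e:t(e)=i}g(e)=1$ to collapse $\sum_e g(e)y_{t(e)}^2$ and $\sum_e g(e)y_{t(e)}a(e)$, and reduce the event to a one-dimensional Gaussian tail equal to $Q\bigl((\|\mathbf{d}\|^2+\sigma_p^2)/(2\sigma\|\mathbf{d}\|)\bigr)$; the only differences are cosmetic (you center on the noise $\mathbf{v}$ where the paper computes the mean of $\sum_i y_i(c_i-p_i)$, and you make explicit the flow-conservation argument the paper leaves implicit). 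One trivial slip: the identity $p_i=\sum_{e:t(e)=i}g(e)a(e)$ is the signal-space projection from the SSCW/JD-SSPCW definition, not Definition \ref{def:Projection}, which projects onto the input bits $f_i$.
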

\begin{proof}
The pairwise error probability $\mbox{Pr}\left(\mathbf{c}\rightarrow\mathbf{p}\right)$
that the LP joint-decoder will choose the pseudo-codeword $\mathbf{p}$
over $\mathbf{c}$ can be written as $\mbox{Pr}\left(\mathbf{c}\rightarrow\mathbf{p}\right)=$\vspace{-1mm}\begin{align*}
 & \mbox{Pr}\left\{ \sum_{i=1}^{n}\sum_{\substack{e\in\mathcal{E}:\\
t(e)=i}
}g(e)\left(y_{t(e)}-a(e)\right)^{2}\leq\sum_{i=1}^{n}\left(y_{i}-c_{i}\right)^{2}\right\} \\
 & =\mbox{Pr}\left\{ \begin{array}{c}
\sum_{i}y_{i}\left(c_{i}-p_{i}\right)\leq\frac{1}{2}\left(\sum_{i}c_{i}^{2}-\sum_{\substack{e\in\mathcal{E}}
}g(e)a^{2}(e)\right)\end{array}\right\} \\
 & \stackrel{(a)}{=}Q\left(\frac{\sum_{i}c_{i}\left(c_{i}-p_{i}\right)-\frac{1}{2}\left(\sum_{i}c_{i}^{2}-\sum_{\substack{e\in\mathcal{E}}
}g(e)a^{2}(e)\right)}{\sigma\sqrt{\sum_{i}\left(c_{i}-p_{i}\right)^{2}}}\right)\\
 & \stackrel{(b)}{=}Q\left(\frac{\left\Vert \mathbf{d}\right\Vert ^{2}+\sigma_{p}^{2}}{2\sigma\left\Vert \mathbf{d}\right\Vert }\right)\stackrel{(c)}{=}Q\left(\frac{d_{gen}\left(\mathbf{c},\,\mathbf{p}\right)}{2\sigma}\right),\end{align*}
where $(a)$ follows from the fact that $\sum_{i}y_{i}\left(c_{i}-p_{i}\right)$
is distributed $\mathcal{N}\left(\sum_{i}c_{i}(c_{i}-p_{i}),\,\sum_{i}(c_{i}-p_{i})^{2}\right)$
and $(b)/(c)$ follow from Defn. \ref{def:dist}.
\end{proof}
Wiberg was the first to define a generalization of the Euclidean distance
to explain errors caused by iterative decoding \cite{Wiberg-96} and
this was extended to non-binary cases \cite{Forney-ima99} where $\mbox{Pr}\left(\mathbf{c}\rightarrow\mathbf{p}\right)$
looks very similar to Thm. \ref{thm:PEP}. The main difference is
the definition of the trellis-wise approach used for JD-TPCW. 

The performance degradation of LP decoding relative to ML decoding
can be explained by pseudo-codewords and their contribution to the
error rate depends on $d_{gen}\left(\mathbf{c},\,\mathbf{p}\right).$
Indeed, by defining $K_{d_{gen}}(\mathbf{c})$ as the number of codewords
and JD-PCWs at distance $d_{gen}$ from $\mathbf{c}$ and $\mathcal{G}(\mathbf{c})$
as the set of generalized Euclidean distances, we can write the union
bound on word error rate (WER) as\vspace{-1mm} \[
P_{w|\mathbf{c}}\leq\sum_{d_{gen}\in\mathcal{G}(\mathbf{c})}K_{d_{gen}}(\mathbf{c})\, Q\left(\frac{d_{gen}}{2\sigma}\right).\]
\vspace{-1mm}Of course, we need the set of JD-TPCWs to compute $\mbox{Pr}\left(\mathbf{c}\rightarrow\mathbf{p}\right)$
with the Thm. \ref{thm:PEP}. There are two complications with this
approach. One is that like original problem \cite{Feldman-2003},
no method is known yet for computing the generalized Euclidean distance
spectrum, apart from going through all error events explicitly. Another
is, unlike original problem, the constraint polytope may not be symmetric
under codeword exchange. Therefore the decoder performance may not
be symmetric under codeword exchange. Hence, the decoder performance
may depend on the transmitted codeword. In this case, the pseudo-codewords
will also depend on the transmitted sequence.

\section{Simulation Results and Error Rate Prediction \label{sec:Simulations} }

\vspace{-2mm}In this section, we present simulation results for two
LDPC codes on the precoded dicode channel (pDIC) and use those results
to predict the error rate well beyond the limits of our simulations.
Both codes are $(3,\,5)$-regular binary LDPC codes; the first has
length 155 and the second has length 455. The parity-check matrices
were chosen randomly except that double-edges and four-cycles were
avoided. Since the performance depends on the transmitted codeword,
the results were obtained for 3 randomly chosen codewords of fixed
weight. The weight was chosen to be roughly half the block length,
giving weight 74 in the first case and 226 in the second case.%
\begin{figure*}[t]
\begin{centering}
\includegraphics[scale=0.38]{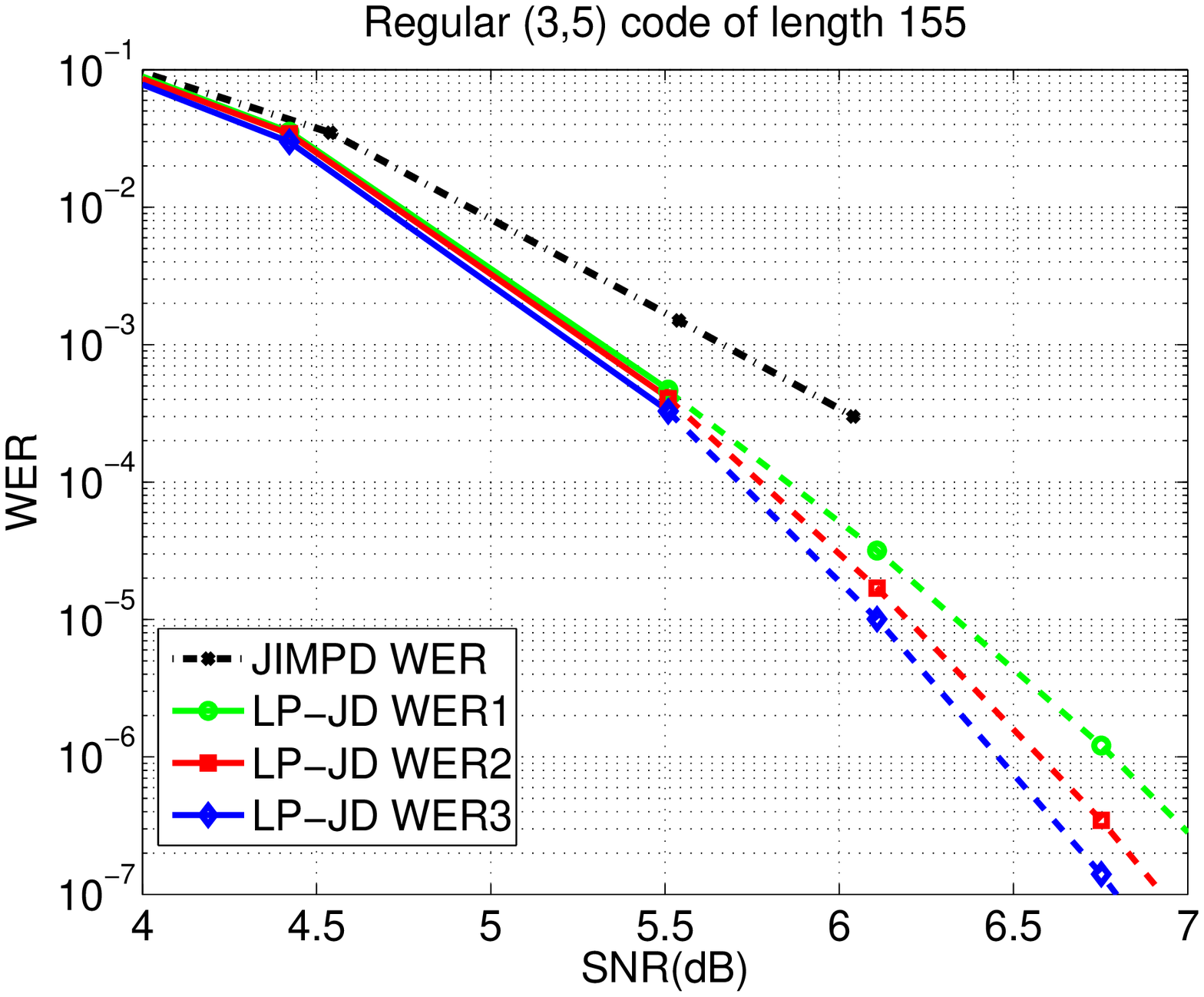}~~~~~~~~~\includegraphics[scale=0.38]{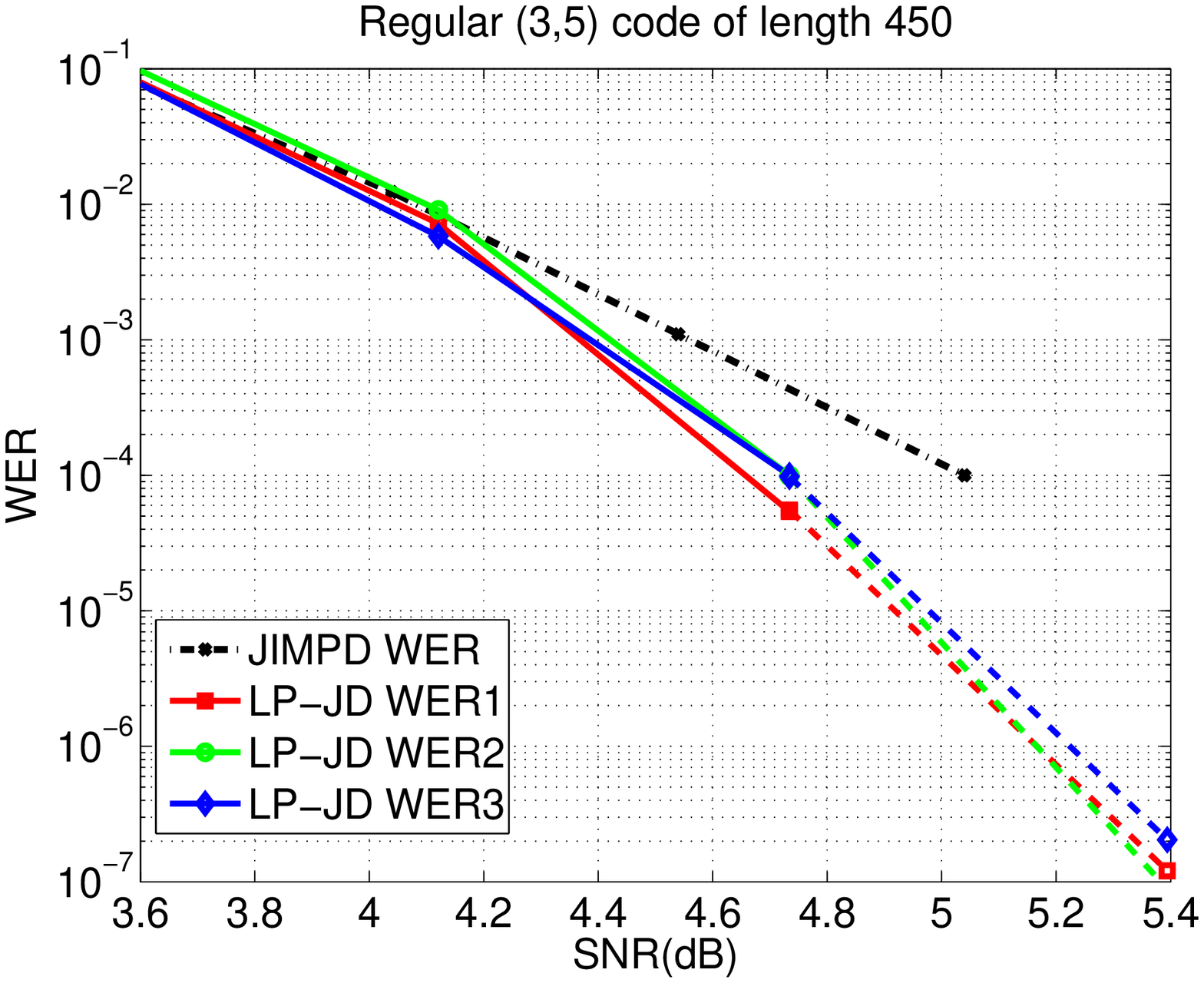}\vspace{-3mm}
\par\end{centering}

\caption{\label{fig:result}This figure shows comparison between the LP joint
decoding and joint iterative message-passing decoding on the precoded
dicode channel with AWGN for random (3,5) regular LDPC codes of length
$n=155$ (left) and $n=450$ (right). The curves shown are the LP-JD
WER (solid), LP-JD WER prediction (dashed) and JIMPD WER (dash-dot).
The experiments were repeated for three different non-zero codewords
in each case. The dashed curves are computed using the union bound
in Sec. \ref{sec:Bound} based on JD-PCWs observed at 3.4558 dB (left)
2.6696 dB (right) and the dash-dot curves are obtained using the state-based
joint iterative message-passing decoder (JIMPD) described in \cite{Kavcic-it03}.
Note that SNR is defined as channel output power divided by $\sigma^{2}$.}
\vspace{-3mm}
\end{figure*}

The results are shown in Fig. \ref{fig:result}. The solid lines represent
the simulation curves while the dashed lines represent a truncated
union bound. The truncated union bound is obtained by computing the
generalized Euclidean distances associated with all decoding errors
that occurred at some low SNR points (e.g., WER of roughly than $10^{-1}$)
until we observe a stationary generalized Euclidean distance spectrum.
This high WER allows the decoder to rapidly discover JD-PCWs. The
dash-dot curves show the state-based joint iterative message-passing
decoder (JIMPD) algorithm described in \cite{Kavcic-it03}. Somewhat
surprisingly, we find that LP joint-decoding outperforms JIMPD by
about 0.5dB at WER of $10^{-4}$.

The LP decoding is performed in the dual domain because this is much
faster than the primal when using MATLAB. Due to the slow speed of
LP decoding still, simulations were completed up to a WER of roughly
$10^{-5}$. It is well-known that the truncated bound should be relatively
tight at high SNR if all the dominant JD-PCWs have been found.

The final complication that must be discussed is the dependence on
the transmitted codeword. It is known that long LDPC codes with joint
iterative decoding experience a concentration phenomenon \cite{Kavcic-it03}
whereby the error probability associated with transmitting a randomly
chosen codeword is very close, with high probability, to the average
error probability over all transmitted codewords. We note that this
effect starts to appear even at the short block lengths used in this
example. More research is required to understand this effect at moderate
block lengths and to verify the same effect for LP decoding. \vspace{-3mm}

\section{Conclusions \label{sec:Concl}}

\vspace{-2mm}In this paper, we present a novel linear-programing
(LP) formulation of joint decoding for LDPC codes on FSCs that offer
decoding performance improvements over joint iterative decoding. Joint-decoding
pseudo-codewords (JD-PCWs) are also defined and the decoder error
rate is upper bounded by a union bound sum over JD-PCWs. %
{}Finally, we propose a simulation-based semi-analytic method for estimating
the error rate of LDPC codes on FSISIC at high SNR using only simulations
at low SNR.\vspace{-3mm}

\bibliographystyle{IEEEtran}

\end{document}